\newcommand{\sed}[1]{}
\newcommand{\ga}{\overline \gamma}
\newcommand{\emp}{\varepsilon}
\newcommand{\sff}{\Sigma^*/_{\ff}}
\newcommand{\N}{\mathbb N}
\newcommand{\ff}{\sim}
\newcommand{\ph}{\varphi}
\newcommand{\fin}{\mathrm{Fin}}
\newcommand{\fins}{\mathrm{Fin}(\Sigma^*)}
\newcommand{\gn}[1]{\left\langle #1 \right\rangle}
\newcommand{\abs}[1]{\left| #1 \right|}
\newtheorem{theorem}{Theorem}
\newtheorem{lemma}{Lemma}
\theoremstyle{definition}
\newtheorem{definition}{Definition}
\theoremstyle{remark}
\newtheorem{example}[theorem]{Example}
\newtheorem{remark}[theorem]{Remark}
\begin{document}

\title{Pseudo-solutions of word equations}

\author{\v St\v ep\' an Holub}
\address{Charles University in Prague Department of Algebra, Czech Republic}
\email{holub@karlin.mff.cuni.cz}

\keywords{equations on words; pseudo-repetition}

\begin{abstract}
We present a framework which allows a uniform approach to the recently introduced concept of pseudo-repetitions on words in the morphic case. This framework is at the same time more general and simpler. We introduce the concept of a pseudo-solution and a pseudo-rank of an equation. In particular, this allows to prove that if a classical equation forces periodicity then it also forces pseudo-periodicity. Consequently, there is no need to investigate generalizations of important equations one by one. 
\end{abstract}

\maketitle

\date{}

\section{Introduction}
It is one of the most basic properties of words (or strings) that if two words $x$ and $y$ satisfy $xy = yx$, then they are both repetitions of a single word. Using a different terminology to express the same fact, the equation $xy = yx$ is `periodicity-forcing'.
Periodicity forcing equations are of a special interest since they delimit nontrivial relations that words can satisfy. Also, it is typically quite difficult to show that an equation is periodicity forcing. A famous example is the equation of Lyndon and Sch\"utzenberger, namely $x^ay^b = z^c$, which is periodicity-forcing when $a,b,c \geq 2$ \cite{ls}. This result has a lot of generalizations, see e.g. \cite{dirkLS2005}.

A generalized concept of pseudo-repetition was introduced in \cite{Elena2010}, involving antimorphic involution motivated by DNA complementarity. This brought about a new group of questions. For example, a complete solution of the Lyndon and Sch\"utzenberger equation in this generalized setting was given in \cite{PseudoLSComplete}.
In \cite{ManeaFW2012}, questions were further generalized by considering arbitrary morphic and antimorphic permutations. In the case of morphic permutations, it turned out that several classical results hold even in the generalized setting, while much less can be said about the antimorphic case. In \cite{DayWords2017}, the authors announced, as the main result, a solution of the Lyndon and Sch\"utzenberger equation for morphic permutations. The proof is only sketched and is very involved. 

 The aim of this paper is to settle a large group of questions concerning pseudo-repetitions in the case of morphic permutations once and for all by showing that \emph{all} equations that are periodicity forcing are also pseudo-periodicity forcing. Example \ref{ex2017} shows how this can be helpful.   In fact, our result is more general in two respects. First, it holds in a more general setting of which the morphic permutation is just one special case. Second, the analogous result holds for any rank of the equation, where `periodicity-forcing' means `of rank one'.   
The strong point of our approach is that this level of generality actually makes the argument simpler.

\section{Anticongruences}
Let $\Sigma$ be an alphabet. The set of all words over $\Sigma$ is denoted as $\Sigma^*$. Endowed with the operation of concatenation, denoted by $\cdot$\,, $\Sigma^*$ is a monoid where the empty word $\emp$ is the unit. 

  The most general setup in which our approach works is captured by an equivalence $\ff$ on $\Sigma^*$ satisfying the following conditions:
\begin{enumerate}
	\item if  $u \ff v$, then $|u| = |v|$;
	\item if $u\cdot v \ff u'\cdot v'$ and $|u|=|u'|$, then $u\ff u'$ and $v\ff v'$.  \label{anti2}
\end{enumerate}
By the first condition, the equivalence $\ff$ can be understood as a countable collection of equivalences, each defined on words of given length. The second condition is in a sense opposite to the familiar property of a congruence, and that is why we call an equivalence satisfying the two conditions above an \emph{anticongruence}. 

Let $[u]$ denote the equivalence class of a word $u$. It is the definition of a monoid congruence that $[u]\odot [v] \subseteq [u\cdot v]$, where $[u]\odot[v] =\{x\cdot y \mid x \in [u],\ y \in [v]\}$. In contrast, for an anticongruence, the condition \eqref{anti2} implies $[u\cdot v] \subseteq [u]\odot [v]$, and the inclusion may be strict. In fact, we have
\begin{align*}
[w] \subseteq \bigcap_{u_i\cdot v_i = w} [u_i]\odot [v_i]\,,
\end{align*}
where $0 < i < |w|$ and $\abs{u_i}=i$.
Note also that 
\[ [u]\odot [v] =\bigcup_{\begin{minipage}[c]{3em} $u'\sim u$\\ $v'\sim v$	
		\end{minipage}}[u'\cdot v']\,, \]
which means that the set $[u]\odot [v]$ typically consists of several equivalence classes.
\begin{example}
Let $\Sigma= \{a,b,c,d\}$, and $\ff$ be such that $a\ff c$ and $b\ff d$, and no other pairs of letters are equivalent. Then, for example, $ac \not\ff ba$. On the other hand, $ac \ff ca$ may, but \textbf{need not} be true. Let $aa\ff cc$ and  $ac \not\ff ca$. Then $aacc\not\ff ccaa$ because $acc \not\ff caa$ which in turn follows from $ac\cdot c \not\ff ca \cdot a$.
\end{example}

\begin{example}
	The trivial instance of an anticongruence is equality. A more interesting example, which motivated our research, is given by \emph{morphic permutations} introduced and studied in \cite{DayWords2017}. A morphic permutation is a permutation on $\Sigma$ extended to a (length preserving) morphism on $\Sigma^*$. Given such a morphic permutation $f$ we can define an anticongruence by $u \ff v$ iff $u = f^i(v)$ for some $i\geq 0$. 	
	The equivalence classes of a letter under a morphic permutation equivalence are cycles of the permutation. Note also that if $u' = f^i(u)$ and
	$v' = f^j(v)$ then $u\cdot v \ff u'\cdot v'$ if and only if $i \equiv j \mod n$ for each  size $n$ of the cycle for some letter in $u$ or $v$.
\end{example}

\section{Pseudo-free basis}
Our results are based on a generalization of some standard notions and facts from free monoids to a setting involving an anticongruence.  
For those generalizations, we add the prefix ``pseudo-'' used in the term ``pseudo-repetition'' introduced in \cite{Elena2010}. 
 
From now on, let an anticongruence $\ff$ on $\Sigma^*$ be given. We shall often refer to the set $\sff$ of equivalence classes of $\ff$. The set $\sff$ is a subset of $\fins$, which is a monoid of finite languages over $\Sigma$ with the operation $\odot$ defined as above by $K \odot L = \{u\cdot v \mid u\in K, v \in L\}$. 
Note, however, that $\sff$ itself is in general not a submonoid of $\fins$, since $[u]\odot [v]$ may consist of several distinct classes of $\ff$.

Let $C$ be a set of equivalence classes, that is, $C\subseteq \sff$. We shall consider two different structures given by $C$. The first one is the submonoid $\gn{C}$ of $\fins$ generated by $C$.
Second, we can see $C$ as an alphabet, and consider the free monoid $C^*$ over $C$. Elements of $C^*$ are best understood as lists $(c_1,c_2,\dots,c_\ell)$ endowed with the operation of list concatenation.
As usual, we shall identify  a `word' $(c)\in C^*$ of length one  with the `letter' $c\in C$.
Note again that although elements of $C$ are equivalence classes of $\ff$, a set $c_1\odot c_2 \cdots \odot c_m\in \gn C$ typically decomposes into many distinct classes.

We shall be interested in \emph{$\ff$-closed monoids} of words, that is, monoids $M\subseteq \Sigma^*$ such that $[u]\subseteq M$ for each $u\in M$. For a free monoid $M\subseteq \Sigma^*$ which is $\ff$-closed, we have the following claims. 
\begin{lemma}\label{uniquec}
	Let $B$ be the (free) basis of a free $\ff$-closed monoid $M$, and let $C = \{[b] \mid b \in B\}$.  
	
	For each $w\in M$, there is a unique word $(c_1,c_2,\dots,c_m)\in C^*$ such that $w\in c_1 \odot c_2 \odot \cdots \odot c_m$. Moreover, $[w]\subseteq c_1 \odot c_2 \odot \cdots \odot c_m$. 
\end{lemma}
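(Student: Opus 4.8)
The plan is to reduce everything to the uniqueness of factorization in the free monoid $M$, the one genuinely new ingredient being the structural fact that \emph{$[b]\subseteq B$ for every $b\in B$}. I would prove this first. Let $b\in B$ and $u\ff b$. Since $M$ is $\ff$-closed and $b\in M$, the word $u$ lies in $M$ and hence factors over $B$; suppose this factorization is nontrivial, say $u=u_1\cdot(u_2\cdots u_k)$ with $k\ge 2$ and all $u_i\in B$. Then $\abs{u_1}<\abs{u}=\abs{b}$, so we may split $b=p\cdot q$ with $\abs{p}=\abs{u_1}$, and both $p,q$ are nonempty. Applying \eqref{anti2} to $u\ff b$ gives $u_1\ff p$ and $u_2\cdots u_k\ff q$. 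Since $u_1\in B\subseteq M$ and $u_2\cdots u_k\in M$ (a nonempty product of basis elements), $\ff$-closedness puts $p$ and $q$ in $M$; being nonempty, each factors over $B$, so $b=p\cdot q$ has a $B$-factorization with at least two factors, contradicting $b\in B$. Hence $[b]\subseteq B$, and therefore every set $c_1\odot\cdots\odot c_m$ with $c_i=[b_i]\in C$ consists exactly of words of the form $u_1\cdots u_m$ with each $u_i\in B$; for such a word $u_1\cdots u_m$ is, by freeness of $M$, its $B$-factorization.

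Existence of the list is then immediate: for $w\in M$ take its $B$-factorization $w=b_1\cdots b_m$ and put $c_i=[b_i]$, so that $w\in c_1\odot\cdots\odot c_m$. For uniqueness, let $w\in c_1\odot\cdots\odot c_m$ with $c_i\in C$; pick $b_i\in B$ with $c_i=[b_i]$ and write $w=u_1\cdots u_m$ with $u_i\in[b_i]$. By the structural fact each $u_i\in B$, so $u_1\cdots u_m$ is \emph{the} $B$-factorization of $w$; by freeness of $M$ the number $m$ and the factors $u_i$, hence the classes $c_i=[u_i]$, depend only on $w$. This is precisely the claimed uniqueness.

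The ``moreover'' clause I would get by induction on $m$, with $m=0$ trivial. Write $w=b_1\cdots b_m$ for the $B$-factorization and let $w'\ff w$; as $\abs{w'}=\abs{w}\ge\abs{b_1}$, split $w'=p\cdot s$ with $\abs{p}=\abs{b_1}$. Then \eqref{anti2} gives $b_1\ff p$ and $b_2\cdots b_m\ff s$, so $p\in[b_1]=c_1$, while the induction hypothesis applied to $b_2\cdots b_m\in M$ and its $\ff$-equivalent $s$ gives $s\in c_2\odot\cdots\odot c_m$; hence $w'=p\cdot s\in c_1\odot\cdots\odot c_m$, as required. The main obstacle is entirely concentrated in the structural fact $[b]\subseteq B$; everything after it is routine bookkeeping with unique $B$-factorizations. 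It is worth emphasising that this step uses that $M$ is an $\ff$-closed \emph{monoid} rather than merely an $\ff$-closed set: one needs products of basis elements to remain in $M$ so that $p$ and $q$ inherit $B$-factorizations, and without this the list $(c_1,\dots,c_m)$ need not be unique.
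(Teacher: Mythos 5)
Your proof is correct and follows essentially the same route as the paper: first establishing $[b]\subseteq B$ via the anticongruence splitting condition and $\ff$-closedness, then reducing existence, uniqueness, and the ``moreover'' clause to unique $B$-factorization in the free monoid $M$. The only cosmetic differences are that you prove $[b]\subseteq B$ by contradiction with a two-part split (the paper splits $b$ along the full $B$-factorization of $u$ and concludes $k=1$) and that you make explicit the induction that the paper leaves implicit in the final clause.
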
	
\begin{proof}
	We first show that $[b]\subseteq B$ for each $b\in B$. Let $u\in [b]$, and let $u = b_1b_2\cdots b_k$ be the (unique) factorization of $u$ into elements of $B$. Let $b = b'_1b_2'\cdots b_k'$ where $|b_i|=|b_i'|$, $i=1,2,\dots,k$. Since $u\ff b$, we have $b_i \ff b_i'$ and therefore $b_i'\in M$, $i=1,2,\dots,k$. This implies $k=1$, and $u\in B$.

	Let now $w = b_1b_2\cdots b_m$ be the (unique) decomposition of $w$ into elements of $B$. Then $w \in c_1 \odot c_2 \odot \cdots \odot c_m$ where $c_i = [b_i]$.
Assume that \[w\in c_1 \odot c_2 \odot \cdots \odot c_m \cap c'_1 \odot c'_2 \odot \cdots \odot c'_{m'},\]	
where each $c_j'$ is an element of $C$. Then $w = b_1b_2\cdots b_m = b_1'b_2'\cdots b'_{m'}$ for some $b_i\in c_i$, $b_j'\in c_j'$. This implies $m = m'$ and $c_i  = c_i'$.

Let finally $w'\ff w = b_1b_2\cdots b_m$ and let $w'= w_1'w_2'\cdots w_m'$ where $|w_i| = |b_i|$. Then $w_i' \ff b_i$, hence $w'\in c_1 \odot c_2 \odot \cdots \odot c_m$.
\end{proof}
For $M$, $B$ and $C$ as in the previous lemma,  we say that $C$ is the \emph{pseudo-free} basis of $M$. By the lemma, we can define a mapping $\gamma: M \to C^*$ by $\gamma: w \mapsto (c_1,c_2,\dots, c_m)$ where $w\in c_1 \odot c_2 \odot \cdots \odot c_m$. It is easy to see that $\gamma$ is a monoid morphism, that is, $\gamma(w_1w_2) = \gamma(w_1)\gamma(w_2)$.
We shall call $\gamma$ the \emph{pseudo-free factorization} of $M$.

The intersection of two free monoids is well known to be free (see \cite[Chapter 1, Section 1.2]{lothaire}). Also, the intersection of two $\ff$-closed monoids is again $\ff$-closed. Therefore, the smallest (w.r.t. inclusion)  $\ff$-closed free monoid containing a given set $W$ of words exists. Such a monoid  $M$ is called the \emph{pseudo-free hull} of $W$,  and the cardinality of its pseudo-free basis is the \emph{pseudo-rank} of $W$.

We remark that standard notions of the \emph{free hull} and of the \emph{rank} of a set $W$ are obtained as above if we consider the trivial case of identity for $\ff$.

\section{Equations and pseudo-solutions}

We now turn our attention to equations. 
A word equation (without constants) is a pair $(r,s)\in \Theta^*\times \Theta^*$ where $\Theta$ is an alphabet of unknowns. (Often, $r=s$ is written instead of $(r,s)$.) A solution  (in $\Sigma^*$) of $(r,s)$ is a monoid morphism $\ph:\Theta^*\to \Sigma^*$ such that $\ph(r)=\ph(s)$.
In other words, $\ph$ is a solution  of $(r,s)$ if the words $r$ and $s$ become equal after substituting each $x\in \Theta$ with $\ph(x)$. Informally, an equation can be seen as a relation and a solution gives words satisfying that relation. 

The \emph{rank of a solution} $\ph$ is the rank (that is, the size of the free basis) of the set $\ph(\Theta)=\{\ph(x) \mid x \in \Theta \}$. Finally,
the \emph{rank of the equation} $e$ is the maximum rank of a solution $\ph$ of $e$. 

\begin{remark}
The rank of the solution defined above is usually called the `free rank'. It is possible to define other ranks of a set of words. For example, the `combinatorial rank' is easier to define but it is less suitable for our purposes. An important fact is that the rank of an equation is independent of what kind of rank is considered. See \cite[Chapter 6, Section 5.2]{handbook} for more details. 
\end{remark}

\begin{example}
	Consider the equation $e=(xy,zx)$ over $\Theta = \{x,y,z\}$. Then $\ph: x \mapsto a$, $y \mapsto bca$, $z \mapsto abc$ is a solution of $e$ in $\{a,b,c\}^*$. 
	The free basis of $\{a,bca,abc\}$ is $\{a,bc\}$ and the rank of $\ph$ is therefore two.
    It is known that any set of three words satisfying a nontrivial relation have the rank at most two (this is a special case of the so called `defect effect'). Consequently, the rank of $e$ is two.
\end{example}

Equations of rank one are called \emph{periodicity forcing}. 

\begin{example}\label{commute}
	It is not difficult to see that two words $u,v\in \Sigma^*$ satisfy $uv = vu$ if and only if $u=t^i$ and $v=t^j$ for some $t\in \Sigma^*$ and $i,j\in \N$. Therefore, $(xy,yx)$ is a periodicity forcing equation. 
\end{example}

\sed{Although the set $\sff$ of all equivalence classes is not necessarily a monoid, it is a subset of the monoid $\fin(\Sigma^*)$ of finite languages with the operation $\odot$. We can therefore consider the submonoid $\gn{\sff}$ of  $\fin(\Sigma^*)$ generated by all equivalence classes of $\ff$. There is a natural projection $\pi$ from $\Sigma^*$ to $\sff$ defined by $\pi: u \mapsto [u]$. Note again that $\pi$ is not a morphism between $\Sigma^*$ and $\gn{\sff}$ if $\ff$ is not a congruence.  If $X$ is a set of words, we shall write $\pi(X) = \{[x]\mid x\in X \}$. Consequently, $\gn{\pi(X)}$ is the set of languages generated by $\pi(X)$ in $\fin(\Sigma^*)$, and $\bigcup\gn{\pi(X)}$ is the set of words that can be factorized into words equivalent to elements of $X$. 
}
\sed{We say that a set $Y \subseteq \Sigma^*$ is \emph{pseudo-generated} by a set $X\subseteq \Sigma^*$ if 
\[Y \subseteq \bigcup \gn{\pi(X)}\,.\]
The definition can be reformulated in an elementary way as follows: $Y$ is pseudo-generated by $X$ if each word $y\in Y$ can be written as $y=y_1y_2\cdots y_\ell$ where for each $i=1,2,\dots,\ell$ there is some $x\in X$ such that  $y_i\ff x$.
}

\sed{We say that a \emph{pseudo-rank} of a set $Y\subseteq \Sigma^*$ is the least possible cardinality of a set $X\subseteq \Sigma^*$ such that $Y$ is pseudo-generated by $X$.
If $Y$ is pseudo-generated by $X$, and $|X|$ is the pseudo-rank of $Y$, then $X$ is called a pseudo-basis of $Y$.
}

\sed{The following lemma serves as an exercise. We provide two different formulations of the proof: one elementary but technical, the other more abstract. 
\begin{lemma}\label{whole}
If $Y$ is  pseudo-generated by $X$, then the whole $\bigcup\gn{\pi(Y)}$ is pseudo-generated by $X$.
\end{lemma}
\begin{proof}[Elementary proof]
	Let $y \in \bigcup\gn{\pi(Y)}$. Then $y = y_1'y_2'\cdots y_\ell'$ where, for each $i=1,2,\dots,\ell$, there is some $y_i\in Y$ such that $y_i \ff y_i'$. Since $Y$ is pseudo-generated by $X$, there are words $x_{i,j}'$, $i = 1,2,\dots,\ell$, $j = 1,2,\dots,\ell_i$, such that $y_i = x_{i,1}'x_{i,2}'\cdots x_{i,\ell_i}'$ for each $i$, and, for each $i,j$, there is some $x_{i,j}\in X$ such that $x_{i,j}\ff x_{i,j}'$. Let $x_{i,j}''$ be such that $|x_{i,j}''| = |x_{i,j}'|$ and $y_i' = x_{i,1}''x_{i,1}''\cdots x_{i,\ell_i}''$. From $y_i \ff y_i'$ and from the definition of an anticongruence, we obtain $x_{i,j}' \ff x_{i,j} ''$, and therefore $x_{i,j} \ff x_{i,j} ''$. This yields the desired factorization of $y$.
\end{proof}
}
\sed{\begin{proof}[Abstract proof]
Let $y \in \bigcup\gn{\pi(Y)}$. Then $y \in [y_1]\odot [y_2]\odot \cdots \odot [y_\ell]$, where $y_i\in Y$. Since $Y$ is pseudo-generated by $X$, we have, by \eqref{inter}, that $[y_i] \subseteq \bigcup \gn{\pi(X)}$ for each $i$. Since $\bigcup \gn{\bigcup \gn Z}=\bigcup \gn Z$ for any $Z \subseteq \fin(\Sigma^*)$, we have
$y\in \bigcup \gn{\pi(X)}$.
\end{proof}}

We now explain what a \emph{pseudo-solution} is. Informally, in a pseudo-solution, we are allowed, in order to achieve the equality, to substitute different occurrences of the same unknown with different, but equivalent words.  
More precisely, if $(x_1x_2\cdots x_\ell,\,x_{\ell+1}x_{\ell+2}\cdots x_{\ell+m})$ is an equation, then its pseudo-solution is given by words  $u_1,u_2,\dots, u_{\ell+m} \in\Sigma^*$ such that 
$u_i \ff u_j$ if $x_i = x_j$, and $u_1u_2\cdots u_\ell = u_{\ell+1}u_{\ell+2}\cdots u_{\ell+m}$.

Mostly as an exercise, we shall show that it is equivalently possible to use a more relaxed requirement, namely that after the substitution, the left and right side of the equation are equivalent. 

\begin{lemma}\label{eqeq}
	Let $(x_1x_2\cdots x_\ell,\,x_{\ell+1}x_{\ell+2}\cdots x_{\ell+m})$ be an equation. Let $u_1,u_2,\dots, u_{\ell+m}$ be words over $\Sigma$ such that 
	$u_i \ff u_j$ if $x_i = x_j$, and $$u_1u_2\cdots u_\ell \ff u_{\ell+1}u_{\ell+2}\cdots u_{\ell+m}.$$ Then there are words $u_1',u_2',\dots, u_{\ell+m}'$ such that 
	$u_i' \ff u_j'$ if $x_i = x_j$, and $$u_1'u_2'\cdots u_\ell' = u_{\ell+1}'u_{\ell+2}'\cdots u_{\ell+m}'.$$
\end{lemma}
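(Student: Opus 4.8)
The plan is to exploit condition (\ref{anti2}) of an anticongruence, which lets us "peel off" prefixes of matching length from an equivalence. Write $L = u_1u_2\cdots u_\ell$ and $R = u_{\ell+1}u_{\ell+2}\cdots u_{\ell+m}$, so that $L \ff R$; in particular $|L| = |R|$ by condition (1). I would fix $L$ itself as the common value of both sides and then redefine the pieces on the right so that they spell out $L$ rather than $R$. Concretely, since the factorization $R = u_{\ell+1}\cdots u_{\ell+m}$ determines the cut points $0 = p_0 \le p_1 \le \cdots \le p_m = |R| = |L|$, where $p_k = |u_{\ell+1}| + \cdots + |u_{\ell+k}|$, define $u_{\ell+k}'$ to be the factor of $L$ occupying positions $p_{k-1}{+}1$ through $p_k$. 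Then by construction $u_{\ell+1}'u_{\ell+2}'\cdots u_{\ell+m}' = L$, and I set $u_i' = u_i$ for $1 \le i \le \ell$, so the right-hand identity $u_1'\cdots u_\ell' = u_{\ell+1}'\cdots u_{\ell+m}'$ holds as a genuine equality.

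The two things left to check are: (a) $u_{\ell+k}' \ff u_{\ell+k}$ for each $k$, and (b) the equivalence constraints coming from repeated unknowns are still satisfied. For (a), apply condition (\ref{anti2}) repeatedly (or by induction on $m$): from $u_{\ell+1}\cdots u_{\ell+m} \ff u_{\ell+1}'\cdots u_{\ell+m}'$ with $|u_{\ell+1}| = |u_{\ell+1}'|$ we get $u_{\ell+1} \ff u_{\ell+1}'$ and $u_{\ell+2}\cdots u_{\ell+m} \ff u_{\ell+2}'\cdots u_{\ell+m}'$, and we continue. For (b), I must verify $u_i' \ff u_j'$ whenever $x_i = x_j$. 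If both indices lie in $\{1,\dots,\ell\}$ this is the hypothesis on the $u_i$; if both lie in $\{\ell+1,\dots,\ell+m\}$ we use $u_i' \ff u_i \ff u_j \ff u_j'$ together with transitivity of $\ff$; and the mixed case, say $i \le \ell < j$, follows from $u_i' = u_i \ff u_j \ff u_j'$.

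The only genuinely delicate point is transitivity in case (b): we are chaining $u_i' \ff u_i$, $u_i \ff u_j$, $u_j \ff u_j'$, which is fine since $\ff$ is an equivalence, but one should make sure the hypothesis "$u_i \ff u_j$ if $x_i = x_j$" really does cover all index pairs — which it does, as it is stated for all $i,j \in \{1,\dots,\ell+m\}$. So I expect no real obstacle; the argument is essentially a bookkeeping exercise once one sees that the right move is to "re-cut" the word $L$ according to the block lengths dictated by the $u_{\ell+k}$. An alternative, fully symmetric phrasing would interleave the cut points of $L$ and $R$ to get a common refinement, but that needlessly complicates the factorization structure relative to the original equation; re-cutting one side is cleaner.
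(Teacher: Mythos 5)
Your proposal is correct and follows essentially the same route as the paper: keep $u_i'=u_i$ on the left, re-cut the word $w=u_1\cdots u_\ell$ into factors matching the lengths of $u_{\ell+1},\dots,u_{\ell+m}$, and use the anticongruence property to conclude $u_{\ell+k}'\ff u_{\ell+k}$. Your additional explicit check of the repeated-unknown constraints via transitivity is a detail the paper leaves implicit, but the argument is the same.
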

\begin{proof}
	Set $u_i' = u_i$ for $i = 1,2,\dots, \ell$, and let $w = u_1u_2\cdots u_\ell$. We define $u_j'$, $j=\ell+1,\ell+2,\dots,\ell+m$ as factors of $w$ such that $|u_j'|=|u_j|$, and  $w=u_{\ell+1}'u_{\ell+2}'\cdots u_{\ell+m}'$. The definition of an anticongruence yields that $u_j'\ff u_j$, which concludes the proof.
\end{proof}

Since the above definition of a pseudo-solution is a little bit too verbose, we adopt the following formal definition. It should be obvious that it is equivalent to the above description.

\begin{definition}
We say that a monoid morphism $\ph: \Theta^* \to \fins$ is a \emph{pseudo-solution} of $(r,s)$ if
\begin{itemize}
	\item  $\ph(x) \in \sff$ for each $x\in \Theta$; and
	\item  $\ph(r) \cap \ph(s)\neq \emptyset$. 
\end{itemize}
\end{definition}
By Lemma \ref{eqeq}, the second condition could be replaced with 
\begin{itemize}
	\item there exist $w_1\in \ph(r)$ and $w_2\in \ph(s)$ such that $w_1 \ff w_2$. 
\end{itemize}

Finally, we define the \emph{pseudo-rank of the equation} $e$ as the maximum pseudo-rank of $\bigcup \ph(\Theta)$ over all possible pseudo-solutions $\ph$ of $e$. An equation of pseudo-rank one is called \emph{pseudo-periodicity forcing}.

\begin{example}
	Consider the equation $e=(xy,yx)$. Let $\Sigma=\{a,b\}$, $a\ff b$, and for words of length at least two, $u \ff v$ is equivalent to $u=v$. Then $\ph_1$ with $\ph_1(x)=[a]$ and $\ph_1(y)=[b]$ is a pseudo-solution of $e$. In fact, $\ph_1(x)=\ph_1(y)=\{a,b\}$ and $\ph_1(xy)=\ph_1(yx)=\{aa,ab,ba,bb\}$. 
	
	On the other hand, the morphism $\ph_2$ defined by $\ph_2(x)=[ab]$ and $\ph_2(y)=[a]$ is not a pseudo-solution of $e$, since $\ph_2(xy)=\{aba,abb\}$ and 
	$\ph_2(yx)=\{aab,bab\}$.
	
	A simple application of our main result below yields that the pseudo-rank of $(xy,yx)$ is one, since it is periodicity forcing. 
\end{example}

\begin{example}\label{exph}
	Consider the equation $e=(xyz,zyx)$. Let $\Sigma=\{a,b,c\}$, and let $\ff$ be the smallest anticongruence satisfying
	\begin{itemize}
		\item $a\ff c$
		\item $ab\ff cb$, $bc \ff ba$
		\item $abc \ff cba$ 
	\end{itemize}
	Then $\ph(x) = [abc]$, $\ph(y) = [b]$, $\ph(z)=[a]$ is a pseudo-solution of $e$. We have
	\begin{align*}
	\ph(xyz) & = \{abc\cdot b \cdot a,cba\cdot b \cdot a,abc \cdot b \cdot c,cba\cdot b\cdot c\}, \\
	\ph(zyx) & = \{a\cdot b\cdot abc,c\cdot b\cdot abc,a\cdot b\cdot cba,c\cdot b\cdot cba\},
	\end{align*}
	hence $abcba \in \ph(xyz) \cap \ph(zyx)$.
	The pseudo-free basis of $\bigcup \ph(\Theta)$ is $\{[a],[b]\}$, and the pseudo-rank of $e$ is two.
\end{example}

\section{Main result}

Periodicity forcing equations have only solutions that can justly be called trivial, namely solutions in which all images are powers of a single word. In Example \ref{commute}, we mentioned that commutation is periodicity forcing. In fact, it is well known that any equation $(r,s)$ in two unknowns is periodicity forcing if it is nontrivial, that is, if $r\neq s$. Concerning three unknowns, we mentioned in the Introduction the famous result by Lyndon and Sch\"uzenberger which claims that $(x^ay^b,z^c)$ is periodicity forcing as soon as $a, b, c\geq 2$. Less well known is the important generalization which says that $(w,z^c)$, where $w\in \{x,y\}^*$ is a primitive word and $c\geq 2$, is periodicity forcing unless $w\in x^*yx^*$ or $w=y^*xy^*$ (see \cite{spehner} and \cite{lerest}).
For some results concerning four unknowns, see for example \cite{CMUC}.

For each of those results a related question can be asked:
Given an equation $e$ that is periodicity forcing,  is it true that all pseudo-solutions are pseudo-periodic? Of course, this may depend on the anticongruence in question. Some answers for particular equations and morphic permutations were given in \cite{ManeaFW2012,DayWords2017}.
The following theorem gives a general answer which makes it unnecessary to produce new results of this kind.

\begin{theorem}\label{main}
	The pseudo-rank of an equation $e$ is at most its rank.
\end{theorem}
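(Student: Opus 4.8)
The plan is to turn an arbitrary pseudo-solution of $e$ into an ordinary solution of $e$ of at least the same rank. Write $e=(r,s)$ with $r=x_1x_2\cdots x_\ell$, $s=x_{\ell+1}x_{\ell+2}\cdots x_{\ell+m}$, and assume every unknown in $\Theta$ occurs in $rs$ (the general case is analogous). Fix a pseudo-solution $\ph$, pick $w\in\ph(r)\cap\ph(s)$, and read $w$ off along $r$ and along $s$ to get words $u_1,\dots,u_{\ell+m}$ with $u_i\in\ph(x_i)$ and $w=u_1\cdots u_\ell=u_{\ell+1}\cdots u_{\ell+m}$. Since each $\ph(x_i)$ is an equivalence class, $\ph(x_i)=[u_i]$, and $u_i\ff u_j$ whenever $x_i=x_j$. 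Put $W=\{u_1,\dots,u_{\ell+m}\}$; as the pseudo-free hull is $\ff$-closed, $W$ and $\bigcup\ph(\Theta)=\bigcup_i[u_i]$ have the same pseudo-free hull $M$, hence the same pseudo-rank. Let $B$ be the free basis of $M$, let $C$ be its pseudo-free basis (so the pseudo-rank is $|C|$), and let $\gamma\colon M\to C^*$ be the pseudo-free factorization of Lemma~\ref{uniquec}.

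First I would note that $\gamma$ is constant on every $\ff$-class contained in $M$: if $u,v\in M$ and $u\ff v$, then by Lemma~\ref{uniquec}, $v\in[u]\subseteq c_1\odot\cdots\odot c_k$ for $\gamma(u)=(c_1,\dots,c_k)$, so $\gamma(v)=\gamma(u)$ by uniqueness. Hence $x\mapsto\gamma(u_i)$, for any occurrence $x=x_i$, is a well-defined morphism $\psi_0\colon\Theta^*\to C^*$, and $\psi_0(r)=\gamma(u_1\cdots u_\ell)=\gamma(w)=\gamma(u_{\ell+1}\cdots u_{\ell+m})=\psi_0(s)$, so $\psi_0$ solves $e$ over the free monoid $C^*$ with $\psi_0(\Theta)=\gamma(W)$. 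Composing with the morphism $\beta\colon C^*\to\Sigma^*$ sending each $c\in C$ to a chosen $b\in B$ with $[b]=c$ yields an ordinary solution $\psi=\beta\circ\psi_0$ of $e$ in $\Sigma^*$; since $\beta$ is injective (the chosen representatives are pairwise distinct and form a subset of the code $B$), $\psi(\Theta)$ has the same rank as $\gamma(W)$. So it remains to show that $\gamma(W)$ has rank at least $|C|$ in $C^*$.

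This is the core, and it is where minimality of the pseudo-free hull enters. Suppose $\gamma(W)\subseteq H$ for a free submonoid $H$ of $C^*$ whose basis has fewer than $|C|$ elements, and set $N=\gamma^{-1}(H)\subseteq M$. Then $N\supseteq W$; $N$ is $\ff$-closed, since $u\in N$, $v\ff u$ force $v\in M$ (as $M$ is $\ff$-closed) and $\gamma(v)=\gamma(u)\in H$, so $v\in N$; and $N$ is free, by the classical fact that a submonoid of a free monoid is free iff it is stable. Indeed, if $p,q\in N$ and $pu,uq\in N$, then $p,q,pu,uq\in M$ and $M$ is stable, so $u\in M$; then $\gamma(p),\gamma(q)\in H$ and $\gamma(pu)=\gamma(p)\gamma(u)$, $\gamma(uq)=\gamma(u)\gamma(q)$ lie in $H$, and $H$ is stable, so $\gamma(u)\in H$, i.e.\ $u\in N$. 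Thus $N$ is a free $\ff$-closed monoid containing $W$; since the pseudo-free hull of $W$ is contained in every such monoid, $M\subseteq N$, hence $N=M$ and $C^*=\gamma(M)\subseteq H$, forcing $H=C^*$ — contradicting that its basis has fewer than $|C|$ elements. Therefore the free hull of $\gamma(W)$ has basis of size at least $|C|$, so the rank of $e$ is at least the rank of $\psi(\Theta)$, which equals the rank of $\gamma(W)$, which is at least $|C|$, the pseudo-rank of $\bigcup\ph(\Theta)$; taking the maximum over all pseudo-solutions $\ph$ completes the proof.

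I expect the one genuinely non-routine point to be the freeness of $N=\gamma^{-1}(H)$; everything else is bookkeeping with Lemma~\ref{uniquec} and the definitions. It rests on two ingredients — that $\gamma$ is constant on $\ff$-classes, so $N$ inherits $\ff$-closedness, and the stability criterion for freeness of submonoids of free monoids, so that $\gamma$-preimages of free submonoids stay free — which together with the defining minimality of the pseudo-free hull force $N=M$ and yield the contradiction. Minor care is also needed for the reduction at the start to ``$\Theta$ consists of the occurring unknowns'', and, at the end, for the facts that precomposition preserves the solution property and that composing with an injective morphism does not change the rank of the image.
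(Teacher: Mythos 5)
Your proposal is correct and follows essentially the same route as the paper: construct an ordinary solution of $e$ over $C^*$ via the pseudo-free factorization $\gamma$, then pull a free submonoid containing its image back through $\gamma$, check that the preimage is a free $\ff$-closed monoid containing the images, and invoke minimality of the pseudo-free hull to force it to be all of $M$, so the rank is $|C|$. The differences are cosmetic: the paper verifies freeness of the pullback by a direct unique-factorization (length) argument rather than the stability criterion, and it stops at the solution over the alphabet $C$ instead of composing with your $\beta$ -- whose rank-preservation, which you assert from injectivity alone, is indeed true (and in your setting follows immediately from your stronger conclusion that $C^*$ is the only free submonoid of $C^*$ containing $\gamma(W)$).
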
 
\begin{proof}
	Let $\ph$ be a pseudo-solution of an equation $e = (r,s)$. 
	Let $M$ be the pseudo-free hull of $\bigcup\ph(\Theta)$, let $C$ be the pseudo-free basis of $M$, and let $\gamma$ be the pseudo-free factorization of $M$. 	Note that the cardinality of $C$ is the pseudo-rank of $\ph$. 
			
	We first adopt the following notation. Let $S\subseteq M$ be such that $\gamma$ has the same value for all elements of $S$. Then we shall write $\ga(S)$ for $\gamma(s)$, $s\in S$. In particular, $\ga([w])$ is well defined in this way for any $w\in M$ by Lemma \ref{uniquec}. Lemma \ref{uniquec} also implies that $\ga(S\odot T) = \ga(S)\ga(T)$ if both $\ga(S)$ and $\ga(T)$ are defined. Altogether, this convention defines a morphism $\ga: \gn C \to C^*$.
	
	We can now define a morphism $\alpha:\Theta^* \to C^*$ by $\alpha(x) = \ga(\ph(x))$. Using the word $w\in \ph(r)\cap \ph(s)$, we also see that $\alpha(r)= \alpha(s) = \gamma(w)$. Therefore, $\alpha$ is an (ordinary) solution of $e$. 
	
	It remains to show that the rank of $\alpha$ is equal to the pseudo-rank of $\ph$, that is, to the cardinality of $C$. In other words, we have to show that $C$ is the free basis of $\{\alpha(x) \mid x\in \Theta\}$.
	
	Let $Y\subseteq C^*$ be the free basis of $\{\alpha(x) \mid x\in \Theta\}$ and let $Z=\{z\in M \mid \gamma(z)\in Y\}$.  We claim that $\gn{Z}$  is freely generated by $Z$. Indeed, let $w=z_1z_2\cdots z_m = z_1'z_2'\cdots z_k'$ where $z_i,z_j'\in Z$ and $z_1\neq z_1'$. Then $|z_1|\neq |z_1'|$, which implies that $\gamma(z_1)\neq \gamma(z_1')$, a contradiction with $\gn Y$ being free since $\gamma(z_1)\gamma(z_2)\cdots \gamma(z_m) = \gamma(z_1')\gamma(z_2')\cdots \gamma(z_k')$.  
	By Lemma \ref{uniquec}, $\gn Z$ is a $\ff$-closed monoid. Also, $\ph(x)\subseteq \gn Z$ for each $x\in \Theta$.
		The minimality of $M$ now implies $\gn Z=M$, hence $Y=C$.
\end{proof}

\begin{example}
	In Example \ref{exph}, we have $C = \{[a],[b]\}$, $\alpha(x) = ([a],[b],[a])$, $\alpha(y) = ([b])$ and $\alpha(z) = ([a])$.  
\end{example}
	
\begin{example}\label{ex2017}
	In \cite{DayWords2017}, the following claim is proved (see \cite{DayWords2017}, Eq.1 and Theorem 21):\medskip
\begin{quote}
	Let $u,v,w\in \Sigma^+$ satisfy 
\[ f^{a_1}(u)\cdots f^{a_r}(u) f^{c_1}(v)\cdots f^{c_s}(v) = f^{b_1}(w)\cdots f^{b_t}(w) \]
for $r,s,t \geq 2$, and a morphic permutation $f$. Then $u$, $v$, and $w$ are $[f]$-repetitions.
\end{quote}	

Our contribution with respect to the above claim is threefold. 
First, we can formulate the claim more elegantly as 
\begin{quote}
	The equation $(x^ay^b,z^c)$ with $a,b,c \geq 2$ is pseudo-periodicity forcing.
\end{quote}	

Second, the claim is more general, since it holds for all anticongruences, not only for morphic permutations. 

Third, and most importantly, the claim is a direct consequence of Theorem \ref{main}, whose proof should be compared to the rough sketch of the proof in \cite{DayWords2017}. Of course, this does not mean that we have found a short self-contained proof of the above claim. The complexity of the whole proof is given by the complexity of the classical proof. However, we have shown that there is no need to laboriously reconstruct the classical proof in a more tedious setting. 
\end{example}	

\begin{example}
	Our approach does not work for anti-morphic permutations for an obvious reason: an analog of Theorem \ref{main} does not hold in that case. For example, let $u$ be equivalent to its reversal $\overline u$. Then the equation $(x^2y^2,z^4)$ has a `pseudo-solution' $x \mapsto aabaaab$, $y\mapsto a$, $z\mapsto aaba$, since
	\[ aabaaab \cdot aabaaab \cdot a \cdot a = aaba \cdot aaba \cdot \overline{aaba} \cdot \overline{aaba}\,.  \]
	In the same time, the rank of this solution is not one, since the words $aabaaab$, $a$ and $aaba$ are not contained in $\gn{\{t,\overline t\}}$ for any $t$. (For more similar examples cf. \cite[Example 1]{Czeizler:2011:ELR:1950991.1951213}.)
\end{example}

\section{Final comments}
We give some less formal comments that can provide further insight into why Theorem \ref{main} holds. 

The most simple nondeterministic algorithm yielding a solution of an equation is as follows: Given an equation $\big((x \cdots),(y \cdots)\big)$, guess which of the images of $x$ and $y$ will be longer, and perform what is called an elementary transformation of the equation. Say that the image of $x$ will be shorter. Then the elementary transformation is $y\mapsto xy$ (that is, replace $y$ with $xy'$ and since the variable $y$ is no more used, rename $y'$ back to $y$). If $x$ and $y$ have the same length, then $y$ is just renamed to $x$. Any solution can be obtained by a finite number of elementary transformations (see  \cite[Chapter 9.5]{lothaire}).

Defining properties of an anticongruence are that equivalent words have the same length and that factorizing an equivalence class yields equivalence classes again. It is important to observe that the sequence of transformations in the above algorithm, and hence the resulting solution, is fully determined by lengths of images, independently of whether we see them as words, or as classes. This is the basic reason why there is a tight correspondence between pseudo-solutions and ordinary solutions. 
\bigskip

I am grateful to an anonymous referees of earlier drafts of this paper who suggested several important improvements of the exposition. 

\bibliographystyle{plain}
\bibliography{pseudo}

\end{document}